\begin{document}

\newcommand{\crbds}{{\sc{Colourful Red-Blue Dominating Set}}\xspace}
\newcommand{\crbdsshort}{{\sc{cRBDS}}\xspace}
\newcommand{\vddeg}{\textsc{Maximum $d$-degenerate Induced Subgraph}\xspace}
\newcommand{\vddegpar}{$d${\sc{-degenerate Vertex Deletion}}\xspace}
\newcommand{\tddegpar}{$2${\sc{-degenerate Vertex Deletion}}\xspace}
\renewcommand{\subset}{{\subseteq}}
\newcommand{\Oo}{\mathcal{O}}
\newcommand{\Ohstar}{\Oo^\star}
\newcommand{\eps}{\varepsilon}
\newcommand{\alg}{\mathcal{A}}
\newcommand{\degbound}{\kappa}
\newcommand{\EVratio}{\lambda}
\newcommand{\stopratio}{\alpha}
\newcommand{\palantratio}{c}
\newcommand{\maybeqed}{\qed}
\newcommand{\NP}{NP}

\spnewtheorem{algrule}{Rule}{\bfseries}{\rm}

\newcommand{\defproblem}[4]{
  \vspace{2mm}
\noindent\fbox{
  \begin{minipage}{0.96\textwidth}
  \begin{tabular*}{0.96\textwidth}{@{\extracolsep{\fill}}lr} #1 & {\bf{Parameter:}} #3 \\ \end{tabular*}
  {\bf{Input:}} #2  \\
  {\bf{Question:}} #4
  \end{minipage}
  }
  \vspace{2mm}
}
\newcommand{\defnoparamproblem}[3]{
  \vspace{2mm}
\noindent\fbox{
  \begin{minipage}{0.96\textwidth}
  #1
  {\bf{Input:}} #2  \\
  {\bf{Question:}} #3
  \end{minipage}
  }
  \vspace{2mm}
}

\title{Finding a maximum induced degenerate subgraph faster than $2^n$}

\author{
  Marcin Pilipczuk\inst{1}\thanks{Partially supported by NCN grant N206567140 and Foundation for Polish Science.}
  \and
  Micha\l{} Pilipczuk\inst{2}\thanks{Partially supported by European Research Council (ERC) Grant ``Rigorous Theory of Preprocessing'', reference 267959.}
}
\authorrunning{Pilipczuk$\times 2$}
\titlerunning{Finding a maximum degenerate\ldots}

\institute{Institute of Informatics, University of Warsaw, Poland\\
  \email{malcin@mimuw.edu.pl}
  \and
  Department of Informatics, University of Bergen, Norway\\
  \email{michal.pilipczuk@ii.uib.no}
}

\maketitle

\begin{abstract}
In this paper we study the problem of finding a maximum induced $d$-degenerate subgraph in a given $n$-vertex graph from the point of view of exact algorithms. We show that for any fixed $d$ one can find a maximum induced $d$-degenerate subgraph in randomized $(2-\eps_d)^nn^{\Oo(1)}$ time, for some constant $\eps_d>0$ depending only on $d$. Moreover, our algorithm can be used to sample inclusion-wise maximal induced $d$-degenerate subgraphs in such a manner that every such subgraph is output with probability at least $(2-\eps_d)^{-n}$; hence, we prove that their number is bounded by $(2-\eps_d)^n$.
\end{abstract}

\section{Introduction}



The theory of exact computations studies the design of algorithms for \NP-hard problems that compute the answer optimally, however using possibly exponential time. The goal is to limit the exponential blow-up in the best possible running-time guarantee. For some problems, like {\sc Independent Set}~\cite{fgk:m-c-jacm}, {\sc Dominating Set}~\cite{fgk:m-c-jacm,rooij:domsetesa09}, and
{\sc Bandwidth}~\cite{nasz-tcs} the research concentrates on achieving better and better constants in the bases of exponents. However, for many important computational tasks designing even a routine faster than trivial brute-force solution or straightforward dynamic program is a challenging combinatorial question; the answer to this question can provide valuable insight into the structure of the problem. Perhaps the most prominent among recent developments in breaking trivial barriers is the algorithm for {\sc Hamiltonian Cycle} of Bj\"orklund~\cite{bjorklund-focs}, but a lot of effort is put also into less fundamental problems, like \textsc{Maximum Induced Planar Graph}~\cite{planar} or a scheduling problem $1|\textrm{prec}|\sum C_i$~\cite{sched}, among many
others \cite{capdom,irredundant,twoconn,exact-sfvs,razgon-mif,exact-fvs}. However, many natural and well-studied problems still lack exact algorithms faster than the trivial ones; the most important examples are {\sc TSP}, {\sc Permanent}, {\sc Set Cover}, {\sc \#Hamiltonian Cycles} and {\sc SAT}. In particular, hardness of {\sc SAT} is the starting point for the {\emph{Strong Exponential Time Hypothesis}} of Impagliazzo and Paturi~\cite{seth,seth2}, which is used as an argument that other problems are hard as well~\cite{cut-and-count,treewidth-lower,patrascu,sodaomc}. 

A group of tasks we are particularly interested in in this paper are the problems that ask for a maximum size induced subgraph belonging to some class~$\Pi$. If belonging to $\Pi$ can be recognized in polynomial time, then we have an obvious brute-force solution working in $2^n n^{\Oo(1)}$ time that iterates through all the subsets of vertices checking which of them induce subgraphs belonging to $\Pi$. Note that the classical {\sc{Independent Set}} problem can be formulated in this manner for $\Pi$ being the class of edgeless graphs, while if $\Pi$ is the class of forests then we arrive at the {\sc{Maximum Induced Forest}}, which is dual to {\sc{Feedback Vertex Set}}. For both these problems algorithms with running time of form $(2-\eps)^n$ for some $\eps>0$ are known~\cite{fgk:m-c-jacm,razgon-mif,exact-fvs}. The list of problems admitting algorithms with similar complexities includes also $\Pi$ being the classes of regular graphs~\cite{regular}, graphs of small treewidth~\cite{small-tw}, planar graphs~\cite{planar}, $2$- or $3$-colourable graphs~\cite{colourable}, bicliques~\cite{bicliques} or graphs excluding a forbidden subgraph~\cite{serge}.

The starting point of our work is the question raised by Fomin et al. in~\cite{planar}. Having obtained an algorithm finding a maximum induced planar graph in time $\Oo(1.7347^n)$, they ask whether their result can be extended to graphs of bounded genus or even to $H$-minor-free graphs for fixed $H$. Note that all these graph classes are hereditary and consist of {\emph{sparse}} graphs, i.e., graphs with the number of edges bounded linearly in the number of vertices. Moreover, for other hereditary sparse classes, such as graphs of bounded treewidth, algorithms with running time $(2-\eps)^n$ for some $\eps>0$ are also known~\cite{small-tw}. Therefore, it is tempting to ask whether the sparseness of the graph class can be used to break the $2^n$ barrier in a more general manner.

In order to formalize this question we study the problem of finding a maximum induced $d$-degenerate graph. Recall that a graph is called $d$-degenerate if each of its subgraphs contains a vertex of degree at most $d$. Every hereditary class of graphs with a number of edges bounded linearly in the number of vertices is $d$-degenerate for some $d$; for example, planar graphs are $5$-degenerate, graphs excluding $K_r$ as a minor are $\Oo(r \sqrt{\log r})$-degenerate, while the class of forests is equivalent to the class of $1$-degenerate graphs. However, $d$-degeneracy does not impose any topological constraints; to see this, note that one can turn any graph into a $2$-degenerate graph by subdividing every edge. Hence, considering a problem on the class of $d$-degenerate graphs can be useful to examine whether it is just sparseness that makes it more tractable, or one has to add additional restrictions of topological nature~\cite{bimber}.

\paragraph{Our results and techniques.}
We make a step towards understanding the complexity of finding a maximum 
induced subgraph from a sparse graph class by breaking the $2^n$-barrier for the
problem of finding maximum induced $d$-degenerate subgraph.
The main result of this paper is the following algorithmic theorem.

\begin{theorem}\label{thm:alg}
For any integer $d \geq 1$ there exists a constant $\eps_d > 0$ and a polynomial-time
randomized algorithm $\alg_d$, which given an $n$-vertex graph $G$ either reports an error, or outputs a subset of vertices inducing a $d$-degenerate subgraph. Moreover, for every inclusion-wise maximal induced $d$-degenerate subgraph, let $X$ be its vertex set, the probability that $\alg_d$ outputs $X$ is at least $(2-\eps_d)^{-n}$.
\end{theorem}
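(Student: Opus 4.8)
The plan is to follow the standard recipe behind ``sampling faster than $2^n$'' statements: design $\alg_d$ as a branching algorithm that attempts to reconstruct a maximal $d$-degenerate induced subgraph $X$ together with a degeneracy ordering of $G[X]$ built from one end, and turn it into a polynomial-time sampler by recursing into a single, uniformly random branch at every branching node. A state of the recursion is a triple $(\sigma,D,R)$ where $\sigma=(v_1,\dots,v_k)$ records vertices already committed to $X$, in the order in which they occur at the start of the degeneracy ordering (so each $v_i$ has at most $d$ neighbours among $v_1,\dots,v_{i-1}$), $D$ is the set of vertices committed to lie outside $X$, and $R=V\setminus(\{v_1,\dots,v_k\}\cup D)$ is the undecided set; initially $D=\emptyset$, $R=V$ and $\sigma$ is empty, and when $R=\emptyset$ the procedure returns $\{v_1,\dots,v_k\}$, which is $d$-degenerate by construction. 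Each recursive call first applies exhaustively a collection of reduction rules that commit vertices without branching; the cleanest one is: if $v\in R$ already has more than $d$ committed neighbours in $\{v_1,\dots,v_k\}$, then $v$ can never be appended legally and hence $v\notin X$, so move $v$ to $D$. If $R$ is still nonempty, the call then performs one branching step.

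The whole point of the construction is the branching step: it must select a set $Y\subseteq R$ of bounded size, enumerate the ways $Y$ could ``look'' in the reconstruction (which vertices of $Y$ land in $X$, and the relevant data about their positions relative to $v_1,\dots,v_k$), and commit all of $Y$ accordingly, in such a way that the number of enumerated possibilities is at most $(2-\eps_d)^{|Y|}$ rather than the trivial $2^{|Y|}$. Given such a rule, the analysis is routine. Every reduction or branching step removes at least one vertex from $R$, so the recursion has depth at most $n$ and performs polynomial work along any fixed root-to-leaf path; hence $\alg_d$ runs in polynomial time. Fixing an inclusion-wise maximal $d$-degenerate $X$ and the degeneracy ordering that the correct sequence of choices would produce, the numbers $m_i$ of vertices committed along the corresponding path sum to $n$, each step offering at most $(2-\eps_d)^{m_i}$ branches, so the probability that all random choices are correct is at least $\prod_i (2-\eps_d)^{-m_i}=(2-\eps_d)^{-n}$, and on that event $\alg_d$ outputs exactly $X$. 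Whenever a branch leads to a contradictory state (for instance a vertex forced both into $\sigma$ and into $D$) the procedure reports an error, which is harmless since the guarantee only concerns the good path.

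The real difficulty -- and where I expect essentially all the work to be -- is producing such an efficient branching rule together with reductions strong enough that it always applies, and two things must be reconciled. First, the incremental construction of the ordering has to be compatible with forced moves: committing a vertex towards the front of the ordering changes the back-degrees of its undecided neighbours, so one must ensure this never spuriously deletes a vertex that truly belongs to $X$; I would handle this by choosing which end to build from, and which reductions to fire, so that forced placements only ever involve vertices with very few live neighbours. Second, and more seriously, one has to show that an ``irreducible'' state has exploitable structure. Here I would isolate a near-regular, bounded-degree core of $G[R]$ and observe that deciding a low-degree vertex $v$ of $G[R]$ together with its neighbourhood has strictly fewer than $2^{1+|N_{G[R]}(v)|}$ feasible patterns, because fixing $v\in X$ pins down its at most $d$ back-neighbours and thereby forbids many subsets. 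The genuinely hard sub-case, in which $G[R]$ is too uniform for any such local gain, I would reduce to an instance of \crbds: the still-undecided vertices must be removed so as to bring the back-degrees of the almost-saturated live vertices down to $d$, which is a red--blue domination condition; one then finishes by invoking a dedicated algorithm that samples solutions of \crbds with the analogous $(2-\eps)^{-n}$ guarantee. Choosing $\eps_d>0$ smaller than the gap obtained in the worst of these cases yields the theorem.
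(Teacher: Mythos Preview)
Your plan diverges from the paper's argument and, as written, has two genuine gaps. The first is the branching gain for a low-degree vertex. Your reduction ``$v$ has more than $d$ neighbours in $\sigma$, hence $v\notin X$'' is only sound because $\sigma$ is a prefix of a fixed degeneracy ordering of $G[X]$; but then a branch must also commit ordering information, and you never say how this is done without inflating the branch count. Worse, the stated source of gain --- ``fixing $v\in X$ pins down its at most $d$ back-neighbours and thereby forbids many subsets'' --- does not constrain the in/out pattern of $N_{G[R]}(v)$: after your reduction every surviving $v$ already has at most $d$ neighbours in $\sigma$, so appending $v$ next imposes no restriction whatsoever on which of its undecided neighbours lie in $X$. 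Thus the low-degree branch still faces all $2^{r+1}$ patterns. Notably, you never use the \emph{maximality} of $X$, which is exactly the lever the paper pulls here: if every $R$-neighbour of $v$ lies outside $X$ and $v$ has at most $d$ neighbours already committed to $X$, then maximality forces $v\in X$, and this is what kills a branch.

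The second gap is the ``genuinely hard sub-case'', which you defer to a sampler for \crbds with a per-solution $(2-\eps)^{-n}$ guarantee; no such result is cited, the reduction is not specified, and it is not clear one exists. The paper handles the dense regime by a quite different and elementary device you do not mention: if $|E(G[Q])|\geq \lambda d|Q|$ then at most a $1/\lambda$ fraction of the edges of $G[Q]$ lie inside $G[X]$, so a uniformly random edge has an endpoint outside $X$ with probability at least $(\lambda-1)/\lambda$, and a uniform $1$-in-$3$ guess on its endpoints is correct with probability exceeding $1/4$ (two vertices decided). A third rule observes that at most $d|A|$ vertices with more than $d$ neighbours in $A$ can belong to $X$, so when there are many such vertices a uniformly random one lies outside $X$ with probability exceeding $1/2$. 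A short counting argument then shows that when none of the three rules fires, a fixed fraction $\alpha>0$ of $V(G)$ is already decided, and one finishes by brute force on the rest. None of these three ingredients appears in your outline.
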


Let $X_0$ be a set of vertices inducing a maximum $d$-degenerate subgraph. If we run the algorithm $(2-\eps_d)^n$ times, we know that with probability at least $1/2$ in one of the runs the set $X_0$ will be found. Hence, outputting the maximum size set among those found by the runs gives the following corollary.

\begin{corollary}\label{cor:max}
There exists a randomized algorithm which, given an $n$-vertex graph $G$, in $(2-\eps_d)^n n^{\Oo(1)}$ time outputs a set $X\subseteq V(G)$ inducing a $d$-degenerate graph. Moreover, $X$ is maximum with probability at least~$\frac{1}{2}$.
\end{corollary}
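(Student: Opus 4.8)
Corollary~\ref{cor:max} follows from Theorem~\ref{thm:alg} by independent repetition: run $\alg_d$ exactly $N=(2-\eps_d)^n$ times, discard the runs reporting an error, and return the largest $d$-degenerate vertex set produced (or $\emptyset$ if there is none). A fixed maximum solution $X_0$ is produced in each run with probability at least $(2-\eps_d)^{-n}$, so it is missed by all $N$ runs with probability at most $(1-(2-\eps_d)^{-n})^{N}<e^{-1}<\tfrac12$, and whenever some run produces $X_0$ the reported set is itself maximum; the running time is $N\cdot n^{\Oo(1)}=(2-\eps_d)^n n^{\Oo(1)}$. So all the weight sits in Theorem~\ref{thm:alg}, which I would prove as follows.

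The structural backbone is the ordering characterisation of degeneracy: $G[X]$ is $d$-degenerate iff its vertices can be ordered so that each has at most $d$ neighbours among the earlier ones (equivalently, $G[X]$ can be repeatedly peeled by deleting a vertex of current degree at most $d$), and, additionally, if $X$ is inclusion-wise maximal then every vertex outside $X$ has at least $d+1$ neighbours inside. I would build the solution by a recursive randomised procedure maintaining a set $A$ of \emph{accepted} vertices (together with a partial degeneracy ordering of $G[A]$), a set $R$ of \emph{rejected} vertices, and an \emph{undecided} set $U=V\setminus A\setminus R$, under an invariant tying $A$, its ordering and $R$ to the unknown target $X$ tightly enough that exactly the $d$-degenerate sets arise as possible final $A$'s and every inclusion-wise maximal $X$ is consistent with some execution. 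At each step the procedure first performs all available \emph{forced moves} for free --- in particular, any vertex of $U$ that already has more than $d$ accepted neighbours can never be appended and is moved to $R$ --- and occasionally it flips a ``stop'' coin (with probability governed by $\stopratio$) which freezes $A$ and finalises it by a random greedy completion to an inclusion-wise maximal $d$-degenerate superset; when no such move applies it \emph{branches}, choosing a suitable vertex $v\in U$ and recursing on ``accept $v$'' and ``reject $v$''.

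To obtain the sampler of the theorem I would, at a branching node whose branches decrease a potential $\mu(A,R,U)$ --- essentially $|U|$ plus small bonuses for undecided vertices already near being un-appendable --- by amounts $a_1,\dots,a_t$, recurse into branch $i$ with probability $x^{-a_i}$ (renormalised upward to sum to $1$), where $x=2-\eps_d$ is chosen so that $\sum_{i}x^{-a_i}\le 1$ for every branching the procedure can reach. Then a root-to-leaf path is followed with probability at least $x^{-\sum_i a_i}\ge x^{-\mu}\ge(2-\eps_d)^{-n}$, since $\mu\le n$ initially and the potential consumed along a path is at most the initial potential. As every inclusion-wise maximal $d$-degenerate $X$ is consistent with an execution --- one making the $X$-respecting choice at every branching and, if it reaches the stopping step, whose greedy completion returns $X$, along which the forced moves never contradict $X$ --- each such $X$ is output with probability at least $(2-\eps_d)^{-n}$, as required. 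The reason $x$ can be taken strictly below $2$ is a case analysis showing that whenever the procedure branches it does so with a branching vector bounded away from $(1,1)$: with a degree threshold $\degbound$ and an edge-to-vertex ratio $\EVratio$ one argues that either accepting or rejecting a well-chosen high-degree undecided vertex forces the status of $\Omega_d(1)$ further vertices --- so $\mu$ drops by strictly more than $1$ on that branch --- or the undecided part is so sparse and rigid that the remaining decisions reduce to a \crbds{} instance, which a dedicated subroutine solves directly within $(2-\eps_d)^n n^{\Oo(1)}$ time.

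The main obstacle will be making this dichotomy watertight and quantitative --- in particular the ``balanced'' regime where no vertex is forced and the undecided graph is sparse but not sparse enough to fall cleanly into the \crbds{} subroutine --- and then calibrating the stopping probability $\stopratio$ against the thresholds $\degbound$, $\EVratio$ so that a single branching number $x=2-\eps_d<2$ works uniformly, which is what yields an explicit $\eps_d$. A secondary technical point, easy to underestimate, is order-validity: ``$v$ has at most $d$ accepted neighbours'' does not by itself certify that appending $v$ keeps $A$ extendable to the target $X$, so the choice of branching vertex and the precise invariant must be arranged to preserve reachability of every inclusion-wise maximal solution, which is exactly what legitimises treating the forced moves and the stopping step as free.
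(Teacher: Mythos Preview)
Your derivation of the corollary from Theorem~\ref{thm:alg} --- run $\alg_d$ independently $(2-\eps_d)^n$ times and return the largest set produced --- is precisely the paper's argument, and the $(1-(2-\eps_d)^{-n})^{(2-\eps_d)^n}<e^{-1}<\tfrac12$ bound is the intended computation.

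Your sketch of Theorem~\ref{thm:alg}, however, diverges from the paper's proof and misses its central mechanism. The paper does \emph{not} obtain $2-\eps_d<2$ from vertex-branching with cascading forced moves. Its engine is Rule~\ref{rule:randedge}: when $|E(G[Q])|\ge\EVratio d|Q|$, sample a uniformly random edge $uv$ of $G[Q]$ and guess, with probability $1/3$ each, one of the three events ``$u\in X$, $v\notin X$'', ``$u\notin X$, $v\in X$'', ``$u,v\notin X$''. Since $G[X\cap Q]$ has at most $d|Q|$ edges, the sampled edge lies outside $G[X]$ with probability at least $(\EVratio-1)/\EVratio$, so two vertices get decided with success probability strictly above $1/4$. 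When this dense rule is off, a constant fraction of $Q$ has degree below $\degbound d$ in $G[Q]$; those with $\le d$ neighbours in $A$ trigger a neighbourhood branch (Rule~\ref{rule:greedy}) exploiting maximality of $X$, while if too many have $>d$ neighbours in $A$, Rule~\ref{rule:palanty} rejects a random one (correct with probability $>1/2$, since at most $d|A|$ such vertices can lie in $X$). The structural lemma is that if none of these three rules applies then $|A\cup Z|>\stopratio n$ already, and a brute-force finish suffices. No \crbds{} reduction appears anywhere; that macro is a leftover.

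The gap in your plan is exactly the dense regime. Under your invariant, accepting or rejecting a single high-degree $v\in U$ forces nothing by itself: a neighbour $w$ becomes forced to $R$ only if it \emph{already} had exactly $d$ neighbours in $A$, and nothing guarantees such $w$ exist --- with $A=\emptyset$ and $d\ge 1$ every branch is $(1,1)$ regardless of the degree of $v$. So ``$\mu$ drops by strictly more than $1$ on one branch'' is unsupported precisely where the paper deploys the random-edge trick. Separately, your forced move ``more than $d$ accepted neighbours $\Rightarrow$ reject'' is sound only if $A$ is a prefix of a \emph{fixed} degeneracy ordering of the target $X$; the paper deliberately avoids this invariant (its $A$ is merely a subset of $X$) and instead handles such vertices statistically via Rule~\ref{rule:palanty}, because maintaining the prefix invariant while getting a nontrivial branching vector is itself the hard part you have not addressed.
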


As the total probability that $\alg_d$ outputs some set of vertices is bounded by~$1$, we obtain also the following corollary.

\begin{corollary}\label{cor:bound}
For any integer $d \geq 1$ there exists a constant $\eps_d>0$
such that any $n$-vertex graph contains at most $(2-\eps_d)^n$ inclusion-wise
maximal induced $d$-degenerate subgraphs.
\end{corollary}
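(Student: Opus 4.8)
The plan is to obtain this counting bound directly from the sampling guarantee of Theorem~\ref{thm:alg}, using the algorithm $\alg_d$ as a black box; no further combinatorial analysis of $d$-degenerate subgraphs will be required. Fix an $n$-vertex graph $G$ and let $\mathcal{F}$ be the family of vertex sets of inclusion-wise maximal induced $d$-degenerate subgraphs of $G$. The object we will analyse is a single run of $\alg_d$ on $G$: its outcome is a random variable that either equals ``error'' or equals exactly one subset of $V(G)$ inducing a $d$-degenerate subgraph.

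The key structural observation is that, within one run, $\alg_d$ returns \emph{at most one} set. Hence for distinct $X,X' \in \mathcal{F}$ the events ``$\alg_d$ outputs $X$'' and ``$\alg_d$ outputs $X'$'' are mutually exclusive, and more generally the events $\{\alg_d\text{ outputs }X\}$ over all $X \in \mathcal{F}$ are pairwise disjoint sub-events of the sample space of a single execution.

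It then remains to sum probabilities. By the ``moreover'' part of Theorem~\ref{thm:alg}, $\Pr[\alg_d\text{ outputs }X] \ge (2-\eps_d)^{-n}$ for every $X \in \mathcal{F}$, with one and the same constant $\eps_d$. Since the corresponding events are disjoint, their probabilities add up to at most $1$, so
\[
  |\mathcal{F}| \cdot (2-\eps_d)^{-n} \;\le\; \sum_{X \in \mathcal{F}} \Pr[\alg_d\text{ outputs }X] \;\le\; 1,
\]
and multiplying by $(2-\eps_d)^n$ gives $|\mathcal{F}| \le (2-\eps_d)^n$, as claimed. There is no real obstacle in this argument: all of the difficulty has been pushed into establishing Theorem~\ref{thm:alg}, and the only point requiring a moment's care is that the constant $\eps_d$ in the sampling bound does not depend on the particular maximal subgraph, so that the resulting estimate is uniform over all of $\mathcal{F}$.
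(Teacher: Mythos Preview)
Your proof is correct and matches the paper's own justification exactly: the paper simply notes that the total probability that $\alg_d$ outputs some set is at most $1$, which is precisely the disjoint-events argument you spell out.
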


Let us elaborate briefly on the idea behind the algorithm of Theorem \ref{thm:alg}.
Assume first that $G$ has large average degree, i.e., $|E(G)| > \EVratio d |V(G)|$ 
for some large constant $\EVratio$. As $d$-degenerate graphs are sparse, i.e., the number 
of edges is less than $d$ times the number of vertices, it follows that for any set $X$ inducing 
a $d$-degenerate graph $G[X]$, only a tiny fraction of edges inside $G$ are in fact inside $G[X]$.
Hence, an edge $uv$ chosen uniformly at random can be assumed with high probability to have at least 
one endpoint outside $X$. We can further choose at random, with probabilities $1/3$ each, one 
of the following decisions: $u\in X$, $v \notin X$ or $u \notin X$, $v \in X$, or $u,v \notin X$. 
In this manner we fix the status of two vertices of $G$ and, if $\EVratio > 4$,
the probability that the guess is correct is larger than $1/4$. If this randomized step cannot be applied, we know 
that the average degree in $G$ is at most $\EVratio d$ and we can apply more standard branching arguments on vertices 
of low degrees.

Our algorithm is a polynomial-time routine that outputs an induced $d$-degenerate graph by guessing assignment of consecutive vertices with probabilities slightly better than $1/2$. We would like to remark that all but one of the ingredients of the algorithm can be turned into standard, deterministic branching steps. The only truly randomized part is the aforementioned random choice of an edge to perform a guess with enhanced success probability. However, to ease the presentation we choose to present the whole algorithm in a randomized fashion by expressing classical branchings as random choices of the branch.

\paragraph{Organization.}
In Section \ref{sec:prelims} we settle notation and give preliminary results on degenerate graphs. Section \ref{sec:alg} contains the proof of Theorem \ref{thm:alg}.
Section \ref{sec:conc} concludes the paper.
\section{Preliminaries}\label{sec:prelims}

\paragraph{Notation.} We use standard graph notation.
For a graph~$G$, by~$V(G)$ and~$E(G)$ we denote its vertex and edge sets, respectively.
For~$v \in V(G)$, its neighborhood~$N_G(v)$ is defined as~$N_G(v) = \{u: uv\in E(G)\}$.
For a set~$X \subseteq V(G)$ by~$G[X]$ we denote the subgraph of~$G$ induced by~$X$.
For a set~$X$ of vertices or edges of~$G$, by~$G \setminus X$ we denote the graph
with the vertices or edges of~$X$ removed; in case of vertex removal, we remove
also all the incident edges.

\paragraph{Degenerate graphs.}
For an integer $d \geq 0$, we say that a graph $G$ is $d$-degenerate
if every subgraph (equivalently, every induced subgraph) of $G$ contains a vertex
of degree at most $d$. Clearly, the class of $d$-degenerate graphs is closed under taking
both subgraphs and induced subgraphs.
Note that $0$-degenerate graphs are independent sets, and the class of $1$-degenerate graphs
is exactly the class of forests.
All planar graphs are $5$-degenerate;
moreover, every $K_r$-minor-free graph (in particular, any
$H$-minor-free graph for $|V(H)| = r$) is $\Oo(r \sqrt{\log r})$-degenerate \cite{minorddeg1,minorddeg2,minorddeg3}. 

The following simple proposition shows that the notion of $d$-degeneracy admits greedy arguments.

\begin{proposition}\label{prop:greedy}
Let $G$ be a graph and $v$ be a vertex of degree at most $d$ in $G$. Then $G$ is $d$-degenerate if and only if $G\setminus v$ is.
\end{proposition}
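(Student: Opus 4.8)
The plan is to prove the two implications separately; both are short, and the only point requiring any care is to keep track of the fact that the definition of $d$-degeneracy quantifies over all subgraphs, not merely the induced ones.

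For the forward implication, suppose $G$ is $d$-degenerate. Since $d$-degeneracy is closed under taking subgraphs (as observed right after the definition) and $G\setminus v$ is a subgraph of $G$, we immediately conclude that $G\setminus v$ is $d$-degenerate. No further work is needed here.

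For the backward implication, suppose $G\setminus v$ is $d$-degenerate, and let $H$ be an arbitrary subgraph of $G$; we must exhibit a vertex of $H$ of degree at most $d$ in $H$. I would split into two cases according to whether $v\in V(H)$. If $v\notin V(H)$, then $H$ is in fact a subgraph of $G\setminus v$, so by $d$-degeneracy of $G\setminus v$ it contains a vertex of degree at most $d$. If $v\in V(H)$, then since $H$ is a subgraph of $G$ we have $\deg_H(v)\le \deg_G(v)\le d$, so $v$ itself serves as the required low-degree vertex. In either case $H$ has a vertex of degree at most $d$, which is exactly the condition for $G$ to be $d$-degenerate.

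There is no real obstacle here: the statement is a routine warm-up lemma whose purpose is to license the greedy peeling of low-degree vertices used later. The only thing to double-check is the harmless observation, used in the second case above, that passing to a subgraph cannot increase a vertex's degree, so the degree bound on $v$ in $G$ is inherited by every subgraph containing $v$. \qed
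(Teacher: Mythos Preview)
Your proof is correct and follows essentially the same argument as the paper: the forward direction is immediate from closure under subgraphs, and the backward direction splits on whether $v$ belongs to the subgraph under consideration, using $v$ itself as the low-degree witness when it does. The only cosmetic difference is that the paper phrases the backward direction in terms of induced subgraphs $G[X]$ rather than arbitrary subgraphs $H$, which is equivalent here.
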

\begin{proof}
As $G\setminus v$ is a subgraph of $G$, then $d$-degeneracy of $G$ implies $d$-degeneracy of $G\setminus v$. Hence, we only need to justify that if $G\setminus v$ is $d$-degenerate, then so does $G$. Take any $X\subseteq V(G)$. If $v\in X$, then the degree of $v$ in $G[X]$ is at most its degree in $G$, hence it is at most $d$. However, if $v\notin X$ then $G[X]$ is a subgraph of $G\setminus v$ and $G[X]$ contains a vertex of degree at most $d$ as well. As $X$ was chosen arbitrarily, the claim follows.\maybeqed
\end{proof}

Proposition~\ref{prop:greedy} ensures that one can test $d$-degeneracy of a graph by in turn finding a vertex of degree at most $d$, which needs to exist due to the definition, and deleting it. If in this manner we can remove all the vertices of the graph, it is clearly $d$-degenerate. Otherwise we end up with an induced subgraph with minimum degree at least $d+1$, which is a sufficient proof that the graph is not $d$-degenerate. Note that this procedure can be implemented in polynomial time. As during each deletion we remove at most $d$ edges from the graph, the following proposition is straightforward.

\begin{proposition}\label{prop:edgebound}
Any $n$-vertex $d$-degenerate graph has at most $dn$ edges.
\end{proposition}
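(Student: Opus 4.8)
The plan is to proceed by induction on the number of vertices $n$, exploiting the greedy elimination process that the definition of $d$-degeneracy provides and that is formalized in Proposition~\ref{prop:greedy}. For the base case $n=0$ there are no edges, so $0 \le d\cdot 0$. For the inductive step, let $G$ be an $n$-vertex $d$-degenerate graph with $n \ge 1$. Applying the definition of $d$-degeneracy to $G$ itself, there is a vertex $v$ with $\deg_G(v) \le d$. The graph $G \setminus v$ is an induced subgraph of $G$, hence also $d$-degenerate, so by the induction hypothesis $|E(G \setminus v)| \le d(n-1)$. Deleting $v$ removes exactly $\deg_G(v) \le d$ edges, so $|E(G)| = |E(G \setminus v)| + \deg_G(v) \le d(n-1) + d = dn$, which closes the induction.

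Equivalently — and this matches the phrasing of the paragraph preceding the statement — one can simply run the degeneracy-testing procedure: repeatedly pick a vertex of degree at most $d$ in the current graph and delete it, charging to that vertex the at most $d$ edges incident to it at the moment of deletion. Since the procedure succeeds (as $G$ is $d$-degenerate, a low-degree vertex always exists), every one of the $n$ vertices is deleted and receives a charge of at most $d$, and every edge is charged exactly once — to whichever endpoint is deleted first. Summing over vertices yields $|E(G)| \le dn$. There is no genuine difficulty here; the only point that needs care is that at each step one must invoke $d$-degeneracy of the \emph{entire} current graph to produce a vertex of degree at most $d$, which is precisely what the definition guarantees, together with the already-noted fact that every subgraph of a $d$-degenerate graph is again $d$-degenerate.
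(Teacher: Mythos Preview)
Your proposal is correct and matches the paper's own argument: the paper does not give a separate proof environment but simply observes, in the paragraph immediately preceding the proposition, that the greedy elimination procedure removes at most $d$ edges per deleted vertex, which is exactly your second (charging) formulation; your inductive version is just the same argument written out formally.
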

\section{The algorithm}\label{sec:alg}

In this section we prove Theorem \ref{thm:alg}.
Let us fix $d \geq 1$, an $n$-vertex graph $G$ and an inclusion-wise maximal set $X \subseteq V(G)$ inducing a $d$-degenerate graph.

The behaviour of the algorithm depends on a few constants that may depend on $d$
and whose values influence the final success probability.
At the end of this section we propose precise values of these constants
and respective values of $\eps_d$ for $1\leq d\leq 6$.
However, as the values of $\eps_d$ are really tiny even for small $d$,
when describing the algorithm we prefer to 
introduce these constants symbolically,
and only argue that there exists their evaluation that leads to a $(2-\eps_d)^{-n}$ lower bound on the probability of successfully sampling $X$.

The algorithm maintains two disjoint sets $A,Z \subseteq V(G)$, consisting of vertices about which we have already made some assumptions: we seek for the set $X$ that contains $A$ and is disjoint from $Z$. Let $Q = V(G) \setminus (A \cup Z)$ be the set of the remaining vertices, whose assignment is not yet decided.

We start with $A = Z = \emptyset$.
The description of the algorithm consists of a sequence of {\em{rules}}; at each point, the lowest-numbered applicable rule is used.
When applying a rule we assign some vertices of $Q$ to the set $A$ or $Z$, depending on some random decision.
We say that an application of a rule is {\em{correct}} if, assuming that before the application we have $A \subseteq X$ and $Z \cap X = \emptyset$,
the vertices assigned to $A$ belong to $X$, and the vertices assigned to $Z$ belong to $V(G) \setminus X$. 
In other words, a correct application assigns the vertices consistently with the fixed solution $X$.

We start with the randomized rule that is triggered when the graph is dense. Observe that, since $G[X]$ is $d$-degenerate, $G[X \cap Q]$ is $d$-degenerate as well
and, by Proposition \ref{prop:edgebound}, contains less than $d|X \cap Q|$ edges.
Thus, if $|E(G[Q])|/|Q|$ is significantly larger than $d$, then only a tiny fraction of the edges of $G[Q]$
are present in $G[X]$. Hence, an overwhelming fraction of edges of $G[Q]$ has at least one of the endpoints outside $X$, so having sampled an edge of $G[Q]$ uniformly at random with high probability we may assume that there are only three possibilities of the behaviour of its endpoints, instead of four. This observation leads to the following rule. Let $\EVratio > 4$ be a constant.
\begin{algrule}\label{rule:randedge}
If $|E(G[Q])| \geq \EVratio d|Q|$, then:
\begin{enumerate}
\item choose an edge $uv \in E(G[Q])$ uniformly at random;
\item with probability $1/3$ each, make one of the following decisions: either assign
$u$ to $A$ and $v$ to $Z$, or assign $u$ to $Z$ and $v$ to $A$, or assign both $u$ and $v$
to $Z$.
\end{enumerate}
\end{algrule}
\begin{lemma}\label{lem:randedge}
Assume that $A \subseteq X$ and $Z \cap X = \emptyset$ before Rule \ref{rule:randedge} is applied. Then the application of Rule \ref{rule:randedge} is correct with probability
at least $\frac{\EVratio - 1}{3\EVratio}$.
\end{lemma}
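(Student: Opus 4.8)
The plan is to write the success probability as a product of two independent contributions: the probability that the randomly sampled edge $uv$ has at least one endpoint outside $X$, and, conditioned on that event, the probability that the subsequent three-way random choice picks the branch consistent with $X$. The second factor will turn out to be exactly $\frac13$, and the first at least $\frac{\EVratio-1}{\EVratio}$, which multiplied together give the claimed bound.

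First I would analyze the conditional step. Since we assume $A\subseteq X$ and $Z\cap X=\emptyset$ before the rule is applied, and $u,v\in Q=V(G)\setminus(A\cup Z)$, whether the application is correct depends only on the membership of $u$ and $v$ in $X$. If $u\in X$ and $v\in X$, then every one of the three decisions places at least one of $u,v$ into $Z$, so no decision is consistent with $X$ and the application necessarily fails. If instead at least one of $u,v$ lies outside $X$, one checks that exactly one of the three decisions is consistent: $u\in X,v\notin X$ is matched by ``$u\to A$, $v\to Z$''; $u\notin X,v\in X$ by ``$u\to Z$, $v\to A$''; and $u\notin X,v\notin X$ by ``$u,v\to Z$''. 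Hence, conditioned on the sampled edge having an endpoint outside $X$, the rule is correct with probability exactly $\frac13$, using that the branch is chosen uniformly and independently of the edge.

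It remains to lower bound the probability that the sampled edge is not entirely inside $X$. The edges of $G[Q]$ with both endpoints in $X$ are exactly the edges of $G[X\cap Q]$. Since $G[X]$ is $d$-degenerate and $d$-degeneracy is preserved under taking induced subgraphs, $G[X\cap Q]$ is $d$-degenerate, so by Proposition~\ref{prop:edgebound} it has at most $d\,|X\cap Q|\le d|Q|$ edges. Because Rule~\ref{rule:randedge} is triggered only when $|E(G[Q])|\ge \EVratio d|Q|$, a uniformly random edge of $G[Q]$ has both endpoints in $X$ with probability at most $\frac{d|Q|}{\EVratio d|Q|}=\frac{1}{\EVratio}$, and therefore has an endpoint outside $X$ with probability at least $\frac{\EVratio-1}{\EVratio}$. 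Combining the two factors gives correctness probability at least $\frac13\cdot\frac{\EVratio-1}{\EVratio}=\frac{\EVratio-1}{3\EVratio}$.

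I do not expect a genuine obstacle here: the argument is short and the arithmetic routine. The only points needing a little care are bookkeeping ones — making sure the edge choice and the branch choice are treated as independent so the product bound is valid, and making sure the sparsity bound is applied to $G[X\cap Q]$ rather than to $G[Q]$ or $G[X]$, since it is precisely the edges of $G[Q]$ with both endpoints in $X$ that constitute the ``bad'' outcomes.
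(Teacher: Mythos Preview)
Your proof is correct and follows essentially the same approach as the paper: bound the probability that the sampled edge has at least one endpoint outside $X$ via the sparsity of $G[X\cap Q]$, then multiply by the conditional success probability $1/3$ of the three-way branch. You are somewhat more explicit than the paper in verifying that exactly one branch is consistent with $X$ in each of the three ``good'' cases, but the argument is the same.
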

\begin{proof}
As $|E(G[Q])| \geq \EVratio d|Q|$, but $|E(G[X \cap Q])| \leq d|X \cap Q| \leq d|Q|$
by Proposition \ref{prop:edgebound},
the probability that $uv \notin E(G[X])$
is at least $\frac{\EVratio - 1}{\EVratio}$.
Conditional on the assumption $uv \notin E(G[X])$, in the second step of
Rule~\ref{rule:randedge} we make a correct decision with probability $1/3$.
This concludes the proof.\maybeqed
\end{proof}
Note that the bound $\frac{\EVratio-1}{3\EVratio}$ is larger than $1/4$
for $\EVratio > 4$.

Equipped with Rule \ref{rule:randedge}, we may focus on the case when
$G[Q]$ has small average degree. Let us introduce a constant $\degbound > 2\EVratio$ and let $S\subseteq Q$ be the set of vertices having degree less than $\degbound d$ in $G[Q]$. If Rule \ref{rule:randedge} is not applicable, then $|E(G[Q])| < \EVratio d|Q|$. Hence we can infer that $|S|\geq \frac{\degbound-2\EVratio}{\degbound}|Q|$, as otherwise by just counting the degrees of vertices in $Q\setminus S$ we could find at least $\frac{1}{2}\cdot \frac{2\EVratio}{\degbound}|Q| \cdot \degbound d=\EVratio d|Q|$ edges in $G[Q]$. Consider any $v\in S$. Such a vertex $v$ may be of two types: it either has at most
$d$ neighbours in $A$, or at least $d+1$ of them. In the first case, we argue that we may perform a good guessing step in the closed neighbourhood of $v$, because the degree of $v$ is bounded and when all the neighbours of $v$ are deleted (assigned to $Z$), then one may greedily assign $v$ to $A$. In the second case, we observe that we cannot assign too many such vertices $v$ to $A$, as otherwise we would obtain a subgraph of $G[A]$ with too high average degree. Let us now proceed to the formal arguments.

\begin{algrule}\label{rule:greedy}
Assume there exists a vertex $v \in Q$ such that $|N_G(v) \cap Q| < \degbound d$
and $|N_G(v) \cap A| \leq d$. Let $r = |N_G(v) \cap Q|$ and
$v_1,v_2,\ldots,v_r$ be an arbitrary ordering of the neighbours of $v$ in $Q$.
Let $\gamma = \gamma(r) \geq 1$ be such that
$$\gamma^{-1} + \gamma^{-2} + \ldots  + \gamma^{-r-1} = 1.$$
Randomly, make one of the following decisions:
\begin{enumerate}
\item for $1 \leq i \leq r$, with probability $\gamma^{-i}$ assign
$v_1,v_2,\ldots, v_{i-1}$ to $Z$ and $v_i$ to~$A$;
\item with probability $\gamma^{-r-1}$ assign all vertices $v_1,v_2,\ldots,v_r$ to $Z$
and $v$ to~$A$.
\end{enumerate}
\end{algrule}
Note that the choice of $\gamma$ not only ensures that the probabilities of the
options in Rule \ref{rule:greedy} sum up to one, but also that
$\gamma(r) \leq \gamma(\lceil \degbound d \rceil - 1 ) < 2$.
We now show a bound on the probability that an application of Rule \ref{rule:greedy} is correct.
\begin{lemma}\label{lem:greedy}
Assume that $A \subseteq X$ and $Z \cap X = \emptyset$ before Rule \ref{rule:greedy} is applied.
Then exactly one of the decisions considered in Rule \ref{rule:greedy} leads to a correct application.
Moreover, if in the correct decision exactly $i_0$ vertices are assigned to $A \cup Z$, then the probability of choosing the correct one is equal to $\gamma^{-i_0}$.
\end{lemma}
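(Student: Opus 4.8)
The plan is to analyze the decisions of Rule~\ref{rule:greedy} by locating the first neighbour of $v$ in the fixed ordering $v_1,\dots,v_r$ that belongs to $X$, and to split into two cases accordingly. Throughout I would use the observation that, under the standing assumption $A\subseteq X$ and $Z\cap X=\emptyset$, a decision is correct precisely when every vertex it assigns to $A$ lies in $X$ and every vertex it assigns to $Z$ lies outside $X$.

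First I would treat the case in which some $v_i$ lies in $X$; let $i_0$ be the smallest such index, so that $v_1,\dots,v_{i_0-1}\notin X$. Then decision~1 with parameter $i=i_0$ assigns $v_1,\dots,v_{i_0-1}$ to $Z$ and $v_{i_0}$ to $A$, and is therefore correct. Every other decision is incorrect: decision~1 with $i<i_0$ assigns $v_i\notin X$ to $A$, while decision~1 with $i>i_0$ as well as decision~2 assign $v_{i_0}\in X$ to $Z$. Hence decision~1 with $i=i_0$ is the unique correct decision; it assigns exactly $i_0$ vertices to $A\cup Z$, and it is selected with probability $\gamma^{-i_0}$, as claimed.

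The case in which none of $v_1,\dots,v_r$ lies in $X$ is the one I expect to be the main obstacle, because here correctness hinges on the claim that $v\in X$, which is where the hypothesis $|N_G(v)\cap A|\le d$ and the maximality of $X$ come into play. To establish it, I would note that $X\subseteq A\cup Q$ since $Z\cap X=\emptyset$, so $N_G(v)\cap X\subseteq (N_G(v)\cap A)\cup\{v_1,\dots,v_r\}$; as $\{v_1,\dots,v_r\}$ is disjoint from $X$, this gives $N_G(v)\cap X\subseteq N_G(v)\cap A$ and hence $|N_G(v)\cap X|\le d$. Thus $v$ has degree at most $d$ in $G[X\cup\{v\}]$, so Proposition~\ref{prop:greedy} applied to $G[X\cup\{v\}]$ and $v$, together with the $d$-degeneracy of $G[X]$, shows that $G[X\cup\{v\}]$ is $d$-degenerate; by inclusion-wise maximality of $X$ we conclude $v\in X$. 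Given this, decision~2 (which puts $v_1,\dots,v_r$ into $Z$ and $v$ into $A$) is correct, whereas every decision~1 assigns some $v_i\notin X$ to $A$ and is incorrect. So decision~2 is the unique correct decision, it assigns $r+1$ vertices to $A\cup Z$, and it is chosen with probability $\gamma^{-r-1}$. Taking the two cases together yields that exactly one decision is correct and that it is chosen with probability $\gamma^{-i_0}$, where $i_0$ denotes the number of vertices that decision assigns to $A\cup Z$, which is exactly the assertion of the lemma.
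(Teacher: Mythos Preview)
Your proof is correct and follows essentially the same approach as the paper: the same two-case split on whether some $v_i$ lies in $X$, the same use of maximality of $X$ together with Proposition~\ref{prop:greedy} to force $v\in X$ in the second case, and the same identification of the unique correct decision and its probability. If anything, you are slightly more explicit than the paper in justifying why $|N_G(v)\cap X|\le d$ in the second case and in verifying directly that every other decision is incorrect (the paper instead first notes that the decisions are mutually contradictory, hence at most one can be correct).
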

\begin{proof}
Firstly observe that the decisions in Rule \ref{rule:greedy} contradict each other, so at most one of them can lead to a correct application.

Assume that $(N_G(v) \cap Q) \cap X \neq \emptyset$ and let $v_{i_0}$ be the vertex from $(N_G(v) \cap Q) \cap X$ with the smallest index. Then the decision, which assigns all the vertices of $N_G(v) \cap Q$ with smaller indices to $Z$ and $v_{i_0}$ to $A$ leads to a correct application. Moreover, it assigns exactly $i_0$ vertices to $A\cup Z$ and the probability of choosing it is equal to $\gamma^{-i_0}$.

Assume now that $(N_G(v) \cap Q) \cap X=\emptyset$. We claim that $v\in X$. Assume otherwise; then $v$ has at most $d$ neighbours in $X$, so by Proposition~\ref{prop:greedy} after greedily incorporating it to $X$ we would still have $G[X]$ being a $d$-degenerate graph. This contradicts maximality of $X$. Hence, we infer that the decision which assigns all the neighbours of $v$ from $Q$ to $Z$ and $v$ itself to $A$ leads to a correct application, it assigns exactly $r+1$ vertices to $A\cup Z$ and has probability $\gamma^{-r-1}$.\maybeqed
\end{proof}

We now handle vertices with more than $d$ neighbours in $A$. Intuitively, there can be at most $d|A|$ such vertices assigned to $A$, as otherwise $A$ would have an induced subgraph with too high average degree. Hence, if there is significantly more than $2d|A|$ such vertices in total, then picking one of them at random with probability higher than $1/2$ gives a vertex that needs to be assigned to $Z$. Let us introduce a constant $\palantratio > 2$.
\begin{algrule}\label{rule:palanty}
If there are at least $\palantratio d|A|$ vertices in $Q$ that have more than $d$
neighbours in $A$, choose one such vertex uniformly at random
and assign it to $Z$.
\end{algrule}
\begin{lemma}\label{lem:palanty}
Assume that $A \subseteq X$ and $Z \cap X = \emptyset$ before Rule \ref{rule:palanty} is applied. Then the application of Rule \ref{rule:palanty}
is correct with probability at least $1-1/\palantratio$.
\end{lemma}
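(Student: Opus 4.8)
The plan is to show that, among the at least $\palantratio d|A|$ vertices of $Q$ eligible for Rule \ref{rule:palanty}, only at most $d|A|$ can actually lie in the fixed solution $X$; a uniformly random choice among the eligible vertices therefore lands outside $X$ with probability at least $1-1/\palantratio$, and since the rule assigns the chosen vertex to $Z$, this is precisely the probability of a correct application.

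First I would fix the notation: let $P = \{v \in Q : |N_G(v)\cap A| > d\}$ be the set of eligible vertices, so the triggering condition gives $|P| \geq \palantratio d|A|$, and let $W = P \cap X$. Since the chosen vertex $u$ lies in $P \subseteq Q$, the application is correct exactly when $u \notin X$, i.e.\ when $u \notin W$. Hence it suffices to establish that $|W| \leq d|A|$, because then the probability of a correct choice is $1 - |W|/|P| \geq 1 - d|A|/(\palantratio d|A|) = 1 - 1/\palantratio$. I should remark in passing that this division is legitimate: the rule is only actually executed when $P \neq \emptyset$, and a vertex with more than $d \geq 1$ neighbours in $A$ forces $A \neq \emptyset$, so $|A| \geq 1$.

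The heart of the argument is the bound $|W| \leq d|A|$, which is a short degeneracy-counting estimate. Because $A \subseteq X$ by the hypothesis of the lemma and $W \subseteq X$ by definition, $G[A \cup W]$ is an induced subgraph of the $d$-degenerate graph $G[X]$, hence itself $d$-degenerate, so by Proposition \ref{prop:edgebound} it has at most $d(|A|+|W|)$ edges. On the other hand, every $v \in W$ contributes at least $d+1$ edges of $G[A\cup W]$ joining $v$ to $A$, and for distinct $v,v' \in W$ these edge sets are disjoint, since each such edge is charged to its unique endpoint outside $A$. Thus the number of $W$–$A$ edges is at least $(d+1)|W|$, giving $(d+1)|W| \leq |E(G[A\cup W])| \leq d(|A|+|W|)$, which rearranges to $|W| \leq d|A|$.

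Combining the two parts completes the proof. I expect the only point requiring care to be the bookkeeping in the middle step: one must make sure that the $W$–$A$ edges counted from different vertices of $W$ are genuinely distinct — which holds because we count only edges with exactly one endpoint in $A$, so each is charged to its single endpoint in $W$ — and one must not forget the trivial but necessary observation that $|A|\geq 1$ whenever the rule actually fires.
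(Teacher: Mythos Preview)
Your proof is correct and follows essentially the same approach as the paper: define $P$, bound $|P\cap X|$ against $d|A|$ by counting the edges of $G[(P\cap X)\cup A]$ via Proposition~\ref{prop:edgebound}, and conclude the probability bound. The only cosmetic differences are that the paper phrases the edge-counting step as a proof by contradiction rather than a direct inequality, and that you add the explicit check that $|A|\geq 1$ when the rule fires, which the paper leaves implicit.
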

\begin{proof}
Let $P = \{v \in Q: |N_G(v) \cap A| > d\}$. As $|P|\geq \palantratio d|A|$, to prove the lemma it suffices
to show that $|P \cap X| < d|A|$. Assume otherwise, and consider the set
$((P \cap X) \cup A) \subseteq X$. The number of edges of
the subgraph of $G[X]$ induced by $(P \cap X) \cup A$ is at least
$$(d+1)|P \cap X| = d|P \cap X| + |P \cap X| \geq d(|P \cap X| + |A|) = d|(P \cap X) \cup A|.$$
This contradicts the assumption that $G[X]$ is $d$-degenerate, due to Proposition~\ref{prop:edgebound}.\maybeqed
\end{proof}
Note that $1 - 1/\palantratio > 1/2$ for $\palantratio > 2$.

We now show that if Rules \ref{rule:randedge}, \ref{rule:greedy} and \ref{rule:palanty}
are not applicable, then $|A \cup Z|$ is large, which means
that the algorithm has already made decisions about a significant fraction of the vertices of the graph.
\begin{lemma}\label{lem:finish}
If Rules \ref{rule:randedge}, \ref{rule:greedy} and \ref{rule:palanty}
are not applicable, then $|A \cup Z| > \stopratio n$
for some constant $\stopratio > 0$ that depends only on the constants $d$,
$\EVratio$, $\degbound$ and $\palantratio$.
\end{lemma}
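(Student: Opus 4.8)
The plan is to chain the three non-applicability hypotheses into a single counting inequality that forces $|Q| := |V(G) \setminus (A \cup Z)|$ to be a constant fraction below $n$. I keep the notation $S \subseteq Q$ for the set of vertices of degree less than $\degbound d$ in $G[Q]$ (introduced in the discussion before Rule~\ref{rule:greedy}), and I write $P = \{v \in Q : |N_G(v) \cap A| > d\}$ for the set tracked by Rule~\ref{rule:palanty}.

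First I would record what each failed rule gives us. Since Rule~\ref{rule:randedge} does not fire, $|E(G[Q])| < \EVratio d|Q|$, and hence $|S| \geq \frac{\degbound - 2\EVratio}{\degbound}|Q|$ by the averaging estimate already spelled out just before Rule~\ref{rule:greedy} (otherwise the vertices of $Q \setminus S$ alone would carry at least $\EVratio d|Q|$ edges of $G[Q]$). Since Rule~\ref{rule:greedy} does not fire, no $v \in Q$ simultaneously satisfies $|N_G(v) \cap Q| < \degbound d$ and $|N_G(v) \cap A| \leq d$; applied to the vertices of $S$, which meet the first condition by definition, this forces $|N_G(v) \cap A| > d$ for every $v \in S$, i.e.\ $S \subseteq P$. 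Since Rule~\ref{rule:palanty} does not fire, $|P| < \palantratio d|A|$.

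Combining these with the trivial bound $|A| \leq |A \cup Z| = n - |Q|$, I get
\[
\frac{\degbound - 2\EVratio}{\degbound}\,|Q| \;\leq\; |S| \;\leq\; |P| \;<\; \palantratio d|A| \;\leq\; \palantratio d\,(n - |Q|).
\]
Writing $\beta = \frac{\degbound - 2\EVratio}{\degbound}$, which is a positive constant because $\degbound > 2\EVratio$, this rearranges to $(\beta + \palantratio d)\,|Q| < \palantratio d\, n$, so $|A \cup Z| = n - |Q| > \frac{\beta}{\beta + \palantratio d}\,n$. Hence the lemma holds with $\stopratio = \frac{\beta}{\beta + \palantratio d} = \frac{\degbound - 2\EVratio}{\degbound - 2\EVratio + \palantratio d\,\degbound}$, which depends only on $d$, $\EVratio$, $\degbound$, $\palantratio$, as required.

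I do not anticipate a genuine obstacle here; the calculation is short and all three estimates are already in place or essentially immediate. The one place that needs care is to bound $|S|$ by routing it through $S \subseteq P$ and Rule~\ref{rule:palanty} into a bound in terms of $|A|$, and then invoke $|A| \leq n - |Q|$: using the weaker $|A| \leq n$ instead would only yield $|Q| < \tfrac{\palantratio d}{\beta} n$, which for the relevant parameter ranges exceeds $n$ and is therefore vacuous.
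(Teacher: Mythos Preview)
Your proof is correct and follows essentially the same route as the paper's own argument: both chain the three non-applicability conditions into $\frac{\degbound - 2\EVratio}{\degbound}|Q| < \palantratio d |A| \leq \palantratio d |A \cup Z|$ and then solve for $|A \cup Z|/n$, arriving at the same constant $\stopratio = \left(\frac{\palantratio d \degbound}{\degbound - 2\EVratio} + 1\right)^{-1}$. Your presentation is slightly more explicit in naming the intermediate sets $S$ and $P$, but the logic is identical.
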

\begin{proof}
As Rule \ref{rule:randedge} is not applicable, $Q$
contains at most $\frac{2\EVratio}{\degbound}|Q|$ vertices of degree
at least $\degbound d$ in $G[Q]$. As Rule \ref{rule:greedy} is not applicable,
the remaining vertices have more than $d$ neighbours in $A$.
As Rule \ref{rule:palanty} is not applicable, we have that
$$\frac{\degbound-2\EVratio}{\degbound} |Q| < \palantratio d |A| \leq \palantratio d |A \cup Z|.$$
As $Q = V(G) \setminus (A \cup Z)$, simple computations show that this is equivalent to
$$\frac{|A \cup Z|}{|V(G)|} > \left(\frac{\palantratio d \degbound}{\degbound - 2\EVratio} + 1\right)^{-1},$$
and the proof is finished.\maybeqed
\end{proof}
Lemma \ref{lem:finish} ensures that at this point the algorithm has already performed enough
steps to achieve the desired success probability. Therefore, we may finish by brute-force.
\begin{algrule}\label{rule:finish}
If $|A \cup Z| > \stopratio n$ for the constant $\stopratio$ given
by Lemma \ref{lem:finish}, for each $v \in Q$ independently,
assign $v$ to $A$ or $Z$ with probability $1/2$ each, and finish the algorithm
by outputting the set $A$ if it induces a $d$-degenerate graph, or reporting an error otherwise.
\end{algrule}
We now summarize the bound on the success probability.
\begin{lemma}\label{lem:summarize}
The algorithm outputs the set $X$ with probability at least
$$\max\left(\sqrt{\frac{3\EVratio}{\EVratio-1}}, \gamma(\lceil \degbound d \rceil - 1), \frac{\palantratio}{\palantratio-1}\right)^{-\stopratio n} 2^{-(1-\stopratio)n},$$
which is equal to $(2-\eps_d)^n$ for some $\eps_d > 0$.
\end{lemma}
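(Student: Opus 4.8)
The plan is to track, step by step, how much probability is lost in each rule application and how many vertices each rule "consumes", and then combine these bounds into a single product over the whole run of the algorithm. Concretely, I would argue that throughout the execution the invariant $A\subseteq X$ and $Z\cap X=\emptyset$ is maintained \emph{conditioned on all previous applications being correct}, so that the probability that the final output equals $X$ is exactly the product of the conditional success probabilities of all the rule applications performed along the way.

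First I would note that each application of Rules~\ref{rule:randedge}, \ref{rule:greedy}, \ref{rule:palanty} only moves vertices from $Q$ into $A\cup Z$ and never moves vertices out, so the process terminates; moreover, since Rules~\ref{rule:randedge}--\ref{rule:palanty} are inapplicable exactly when Lemma~\ref{lem:finish} applies, the algorithm eventually reaches Rule~\ref{rule:finish}. Let me say that before Rule~\ref{rule:finish} is triggered, the applications of Rules~\ref{rule:randedge}, \ref{rule:greedy}, \ref{rule:palanty} have assigned altogether $m$ vertices, where $m\leq |A\cup Z|$ at the moment Rule~\ref{rule:finish} fires, hence $m$ can be as large as $n$ but Lemma~\ref{lem:finish} guarantees that $|A\cup Z|>\stopratio n$ at that point; the remaining $n-|A\cup Z| < (1-\stopratio)n$ vertices are handled by the brute-force Rule~\ref{rule:finish}, each contributing a factor $1/2$. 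The key quantitative observation is that in each of the "good" rules the conditional probability of a correct application, \emph{per assigned vertex}, is at least some constant strictly larger than $1/2$: Rule~\ref{rule:randedge} assigns $2$ vertices with success probability $\geq\frac{\EVratio-1}{3\EVratio}$ by Lemma~\ref{lem:randedge}, giving a per-vertex factor $\sqrt{\frac{\EVratio-1}{3\EVratio}}$; Rule~\ref{rule:greedy} assigns $i_0$ vertices with success probability $\gamma^{-i_0}$ by Lemma~\ref{lem:greedy}, so a per-vertex factor $\gamma^{-1}\geq\gamma(\lceil\degbound d\rceil-1)^{-1}$; Rule~\ref{rule:palanty} assigns $1$ vertex with success probability $\geq 1-1/\palantratio$.

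Next I would assemble these into a single bound. Writing $\beta$ for the reciprocal of the maximum appearing in the statement, i.e.
\[
\beta=\min\!\left(\sqrt{\tfrac{\EVratio-1}{3\EVratio}},\ \gamma(\lceil\degbound d\rceil-1)^{-1},\ 1-\tfrac{1}{\palantratio}\right),
\]
each "good" rule application contributes at least $\beta$ to the power of the number of vertices it assigns. Since the total number of vertices assigned by good rules is at most $|A\cup Z|$ at the time Rule~\ref{rule:finish} fires, and since $\beta<1$, the product of these contributions is at least $\beta^{|A\cup Z|}\geq\beta^{n}$ — but to get the sharper bound of the lemma I would instead split: the good rules assign at most $n$ vertices but certainly assign exactly $|A\cup Z|$ vertices (at the moment Rule~\ref{rule:finish} fires $A\cup Z$ is precisely the set of vertices touched by good rules), and then Rule~\ref{rule:finish} correctly classifies each of the remaining $n-|A\cup Z|$ vertices with probability $1/2$, and outputs $X$ precisely if all these coin flips agree with $X$ (noting $G[X]$ is indeed $d$-degenerate so the final check passes). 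Hence the overall success probability is at least $\beta^{|A\cup Z|}\cdot 2^{-(n-|A\cup Z|)}$. Since $\beta<1/2$... actually $\beta>1/2$ by the three remarks in the text (each of $\sqrt{\frac{3\EVratio}{\EVratio-1}}$, $\gamma(\lceil\degbound d\rceil-1)$, $\frac{\palantratio}{\palantratio-1}$ is less than $2$), the expression $\beta^{t}2^{-(n-t)}$ is \emph{increasing} in $t$, so it is minimized at the smallest possible value of $t=|A\cup Z|$, which by Lemma~\ref{lem:finish} exceeds $\stopratio n$; plugging $t=\stopratio n$ gives the claimed lower bound $\beta^{\stopratio n}2^{-(1-\stopratio)n}$.

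Finally I would verify that this quantity is of the form $(2-\eps_d)^{-n}$ for some $\eps_d>0$. The bound equals $\bigl(\beta^{\stopratio}2^{-(1-\stopratio)}\bigr)^{n}$, and it suffices to check that the base $\beta^{\stopratio}2^{-(1-\stopratio)}$ is strictly greater than $1/2$, equivalently that $\beta^{\stopratio}2^{\stopratio}>1$, equivalently $(2\beta)^{\stopratio}>1$, which holds because $2\beta>1$ and $\stopratio>0$; thus setting $2-\eps_d:=(\beta^{\stopratio}2^{-(1-\stopratio)})^{-1}\in(1,2)$ works. The main obstacle, and the point deserving the most care, is the conditioning argument: one must argue cleanly that the events "application $j$ is correct" can be chained, i.e. that $\Pr[\text{output}=X]$ equals the product over $j$ of $\Pr[\text{application }j\text{ correct}\mid\text{applications }1,\dots,j-1\text{ correct}]$, using that correctness of all prior applications is exactly what re-establishes the hypothesis $A\subseteq X$, $Z\cap X=\emptyset$ needed to invoke Lemmas~\ref{lem:randedge}, \ref{lem:greedy}, \ref{lem:palanty} for the next step — together with the bookkeeping that the exponents of $\beta$ across all good-rule applications sum to exactly $|A\cup Z|$ and hence exceed $\stopratio n$, and the monotonicity observation that lets us substitute the worst case $t=\stopratio n$.
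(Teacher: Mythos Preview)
Your proposal is correct and follows the same approach as the paper: bound the per-vertex success probability of each of Rules~\ref{rule:randedge}--\ref{rule:palanty} via Lemmata~\ref{lem:randedge}--\ref{lem:palanty}, chain the conditional probabilities to get $\Pr[A\subseteq X,\ Z\cap X=\emptyset]\ge\beta^{|A\cup Z|}$ at the moment Rule~\ref{rule:finish} fires, and then use $|A\cup Z|>\stopratio n$ together with the fact that $\beta>1/2$ to replace $|A\cup Z|$ by $\stopratio n$. The paper's proof is essentially a two-sentence sketch of exactly this; your write-up simply makes explicit the per-vertex accounting, the chaining of conditionals, and the monotonicity step that the paper leaves as ``a straightforward corollary''.
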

\begin{proof}
Recall that $\frac{3\EVratio}{\EVratio-1} < 4$, $\gamma(\lceil \degbound d \rceil - 1) < 2$,
$\frac{\palantratio}{\palantratio-1} < 2$ and $\stopratio > 0$, by the choice of the constants
and by Lemma \ref{lem:finish}.
Therefore, it suffices to prove that, before Rule \ref{rule:finish} is applied,
the probability that $A \subseteq X$ and $Z \cap X = \emptyset$ is at least
$$\max\left(\sqrt{\frac{3\EVratio}{\EVratio-1}}, \gamma(\lceil \degbound d \rceil - 1), \frac{\palantratio}{\palantratio-1}\right)^{-|A \cup Z|}.$$
However, this is a straightforward corollary of Lemmata
\ref{lem:randedge}, \ref{lem:greedy} and \ref{lem:palanty}.\maybeqed
\end{proof}

This concludes the proof of Theorem \ref{thm:alg}.
In Table \ref{table:values} we provide a choice of values of the constants for small
values of $d$, together with corresponding value of $2-\eps_d$.

\begin{table}[htb]
\begin{center}
\begin{tabular}{|c|c|}
\hline
$d$ &  1 \\
$\EVratio$ &  4.0238224 \\
$\degbound$ &  $9$ \\
$\palantratio$ &  2.00197442 \\
$\stopratio$ &  0.050203 \\
$2-\eps_d$ &  1.99991 \\
\hline
$d$ &  2 \\
$\EVratio$ &  4.00009156 \\
$\degbound$ &  $17/2$ \\
$\palantratio$ &  2.00000763 \\
$\stopratio$ &  0.01449 \\
$2-\eps_d$ &  1.9999999 \\
\hline
$d$ &  3 \\
$\EVratio$ &  4.000000357628 \\
$\degbound$ &  $25/3$ \\
$\palantratio$ &  2.0000000298 \\
$\stopratio$ &  0.0066225 \\
$2-\eps_d$ &  1.9999999999 \\
\hline
  \end{tabular}
  \qquad
\begin{tabular}{|c|c|}
\hline
$d$ &  4 \\
$\EVratio$ &  4.000000001397 \\
$\degbound$ &  $33/4$ \\
$\palantratio$ &  2.0000000001164 \\
$\stopratio$ &  0.0037736 \\
$2-\eps_d$ &  1.9999999999996 \\
\hline
$d$ &  5 \\
$\EVratio$ &  4.000000000005457 \\
$\degbound$ &  $41/5$ \\
$\palantratio$ &  2.0000000000004548 \\
$\stopratio$ &  0.0024331 \\
$2-\eps_d$ &  1.999999999999999 \\
\hline
$d$ &  6 \\
$\EVratio$ &  4.000000000000021316 \\
$\degbound$ &  $49/6$ \\
$\palantratio$ &  2.0000000000000017833 \\
$\stopratio$ &  0.0016978 \\
$2-\eps_d$ &  1.999999999999999997 \\
\hline
\end{tabular}
\end{center}
\caption{Example values of the constants together with the corresponding success probability.}
\label{table:values}
\end{table}

\section{Conclusions}\label{sec:conc}

We have shown that the \vddeg{} problem can be solved in 
time $(2-\eps_d)^n n^{\Oo(1)}$ for any fixed $d \geq 1$.
There are two natural questions arising from our work.
First, can the algorithm be derandomized? Rules \ref{rule:greedy}
and \ref{rule:palanty} can be easily transformed into appropriate branching rules,
but we do not know how to handle Rule \ref{rule:randedge} without randomization.

Second, our constants $\eps_d$ are really tiny even for small values of $d$.
This is mainly caused by two facts: the gain over a straightforward brute-force
algorithm in Rule \ref{rule:greedy} is very small (i.e.,
$\gamma(\lfloor \degbound d \rfloor)$ is very close to $2$)
and the algorithm falls back to Rule \ref{rule:finish} after processing
only a tiny fraction $\stopratio$ of the entire graph.
Can the running time of the algorithm be significantly improved? Another interesting question would be to investigate, whether the \vddeg{} problem can be solved in time $(2-\eps)^n  n^{\Oo(1)}$ for some universal constant $\eps$ that is independent of $d$.

Apart from the above questions,
we would like to state here a significantly more challenging goal.
Let $\mathcal{G}$ be a polynomially recognizable graph class of bounded degeneracy
(i.e., there exists a constant $d$ such that each $G \in \mathcal{G}$ is $d$-degenerate).
Can the corresponding \textsc{Maximum Induced $\mathcal{G}$-Subgraph} problem
be solved in $(2-\eps_\mathcal{G})^n$ time for some constant
$\eps_\mathcal{G} > 0$ that depends only on the class $\mathcal{G}$?
Can we prove some meta-result for such type of problems?

Our Rules \ref{rule:randedge} and \ref{rule:palanty} are valid for any such class
$\mathcal{G}$; however, this is not true for the greedy step in Rule \ref{rule:greedy}.
In particular, we do not know how to handle the
\textsc{Maximum Induced $\mathcal{G}$-Subgraph} problem faster than $2^n$ even if the input is assumed to be $d$-degenerate.

\vskip 1.0cm

\noindent{\bf{Acknowledgements.}} We would like to thank Marek Cygan, Fedor V. Fomin and Pim van 't Hof for helpful discussions.

\bibliographystyle{splncs}
\bibliography{flat_d-degenerate-vd}

\end{document}